%% file: main.tex
\documentclass[11pt,letterpaper]{article}
\input{header}

\usepackage{enumitem}

\usepackage{authblk}

\title{Directed Global Minimum Cut in Almost-Linear Time}

\author[1]{Henry Fleischmann}
\author[1]{Jason Li}
\author[2]{Thatchaphol Saranurak}
\author[2]{Benyu Wang}
\affil[1]{Carnegie Mellon University $\{\texttt{hfleisch}, \texttt{jmli}\}$\texttt{@cs.cmu.edu}}
\affil[2]{University of Michigan $\{\texttt{thsa}, \texttt{benyuw}\}$\texttt{@umich.edu}}

\date{}
\begin{document}

\maketitle

\begin{abstract}
We give a randomized reduction from global minimum cut in a \emph{directed} graph
with $n$ vertices and $m$ weighted edges to maximum-flow computations
on graphs of total size $\widetilde{O}(m)$. For polynomially bounded
integral weights, this yields the first almost-linear $m^{1+o(1)}$-time
algorithm~\cite{vandenbrand2023deterministic}, improving the previous
$m^{1+o(1)}\min\{\sqrt{n},n/m^{1/3}\}$ bound~\cite{cen2022minimum}.

Previous work reduced finding the minimum cut to constructing  a \emph{1-respecting arborescence}, a directed
spanning tree with exactly one edge crossing a minimum cut \cite{cen2022minimum}. They then construct such a tree by sampling from a $(1+\eps)$-approximate arborescence packing, which currently requires superlinear time. Our algorithm sidesteps constructing the packing entirely and relies on a novel iterative arborescence sampling procedure. 
We show that, after $O(\log n)$ rounds of sampling, our final arborescence 1-respects the minimum cut with constant probability.
\end{abstract}

\paragraph{AI disclosure.}
The main result of this paper was discovered by 
OpenAI's GPT-6 Astra, 
with the suggestion of using length-constrained expander hierarchies to give a faster algorithm for $(1+\eps)$-approximate arborescence packing. 
The authors have made an effort to simplify and improve the exposition.
The authors assume responsibility for all content. 

\input{intro}

\input{tree-sampling-to-mincut}

\input{implementing-tree-sampling}

\input{flow-routine}


\section*{Acknowledgments}
This material is based upon work supported by the National Science Foundation Graduate
Research Fellowship Program under Grant No(s).\ DGE2140739 and DGE2631988. Any opinions, findings,
and conclusions or recommendations expressed in this material are those of the author(s)
and do not necessarily reflect the views of the National Science Foundation.

Saranurak is supported by NSF Grant CCF-2238138 and a Sloan Fellowship.
{\small
\bibliographystyle{alpha}
\bibliography{references}
}

\newpage
\appendix

\end{document}

%% file: header.tex
\usepackage[T1]{fontenc}
\usepackage{babel}
\usepackage{verbatim}
\usepackage{bm}
\usepackage{amsmath,amstext,amsthm,amssymb}
\usepackage{thmtools}
\usepackage{thm-restate}
\usepackage{standalone}
\usepackage{float}
\usepackage{xspace}
\usepackage{graphicx}
\usepackage{mathtools}
\usepackage{subcaption,array}
\usepackage{wrapfig}
\usepackage{booktabs}
\usepackage{makecell}
\usepackage{array}
\usepackage{multirow}
\usepackage[usenames,dvipsnames]{xcolor}
\usepackage{tikz}
\usetikzlibrary{arrows.meta}
\usepackage{wrapfig}
\usepackage{boxedminipage,wrapfig}
\usepackage{pifont}%

\usepackage[letterpaper,top=1in,bottom=2cm,left=1in,right=1in,marginparwidth=1.75cm,footskip=1cm]{geometry}
\usepackage[font={small,it}]{caption}
\usepackage{amssymb}
\usepackage{amsthm}
\usepackage{comment}
\usepackage{mdframed}
\usepackage{xspace}
\usepackage{multicol}
\usepackage{tabularx}
\usepackage{makecell}
\usepackage{parskip}

\usepackage[ruled,vlined,linesnumbered,algonl,procnumbered]{algorithm2e}
\SetEndCharOfAlgoLine{}
\SetKwComment{Comment}{\footnotesize$\triangleright$\ }{}
\SetCommentSty{mycommfont}

\usepackage{enumitem}
\usepackage{nicefrac}

\makeatletter
\@ifundefined{newcounteralias}{}{%
  \renewcommand\thmt@autorefsetup{%
    \@xa\def\csname\thmt@envname autorefname\@xa\endcsname\@xa{\thmt@thmname}%
  }%
}
\makeatother

\newtheorem{theorem}{Theorem}[section]
\newtheorem{lemma}[theorem]{Lemma}

\newtheorem{claim}[theorem]{Claim}
\newtheorem*{claim*}{Claim}

\allowdisplaybreaks

\declaretheoremstyle[
    bodyfont=\normalfont  %
]{upright}

\usepackage{graphicx}
\usepackage[colorlinks=true, allcolors=blue]{hyperref}
\usepackage[nameinlink]{cleveref}
\newcommand{\eps}{\varepsilon}

\newcommand{\calA}{\mathcal{A}}

\newcommand{\bw}{\textbf{w}}
\newcommand{\bc}{\textbf{c}}
\newcommand{\bx}{\textbf{x}}
\newcommand{\bd}{\textbf{d}}
\newcommand{\bg}{\textbf{g}}
\newcommand{\bh}{\textbf{h}}
\newcommand{\bq}{\textbf{q}}

\DeclareMathOperator{\poly}{poly}

\newcommand{\EE}{\mathbb{E}}

\newcommand{\RR}{\mathbb{R}}

\newcommand{\ZZ}{\mathbb{Z}}

\usepackage[colorinlistoftodos,textsize=tiny,textwidth=2cm,color=green!50!gray]{todonotes}

\newlength\hwhatcwidth%
\newlength\hwhatuwidth%
\newcommand*\halfwidehat[1]{%
  \settowidth\hwhatcwidth{\ensuremath{#1}}%
  \settowidth\hwhatuwidth{\ensuremath{\mathrm{#1}}}%
  \addtolength\hwhatuwidth{-\hwhatcwidth}%
  \makebox[0pt][l]{%
    \kern\dimexpr0.25\hwhatcwidth-0.667\hwhatuwidth\relax%
    \ensuremath{\widehat{\vphantom{#1}\rule{0.5\hwhatcwidth}{0pt}}}%
  }#1%
}%

\usepackage{todonotes}

\newcommand{\eat}[1]{}

\crefname{observation}{observation}{observations}
\Crefname{observation}{Observation}{Observations}
\Crefname{proposition}{Proposition}{Propositions}
\Crefname{open}{Open Question}{Open Questions}

\Crefname{algocf}{Algorithm}{Algorithms}
\crefname{algocfline}{line}{lines}
\Crefname{invariant}{Invariant}{Invariants}
\Crefname{claim}{Claim}{Claims}
\Crefname{subclaim}{Subclaim}{Subclaims}
\Crefname{thm}{Theorem}{Theorems}
\Crefname{lem}{Lemma}{Lemmas}
\Crefname{restatelem}{Lemma}{Lemmas}

%% file: intro.tex
\section{Introduction}

The directed minimum cut problem asks for a minimum-weight set of edges whose deletion forbids some vertex $s$ from reaching another vertex $t$. It is a fundamental robustness measure of directed graphs that has been studied for five decades.

Early algorithms for directed edge connectivity include Even--Tarjan~\cite{even1975network} and Mansour--Schieber~\cite{mansour1989finding}. Hao--Orlin~\cite{hao1994faster} solve weighted directed mincut in $\widetilde O(mn)$ time. Gabow's matroid approach~\cite{gabow1995matroid} gives $\widetilde O(m\lambda)$ time for unweighted graphs, where $\lambda$ is the minimum cut value. Chekuri--Quanrud~\cite{chekuri2021rooted} obtain near-quadratic time for unweighted simple digraphs. For general weighted graphs, the previous fastest algorithm, due to Cen et al.\ \cite{cen2022minimum}, runs in $m^{1+o(1)}\min\{\sqrt n,n/m^{1/3}\}$ time. 
Recent work obtains almost-linear time approximations~\cite{jiang2026approximating,quanrud2026approximating,mosenzon2026almost}. However, almost-linear-time exact algorithms have remained open until our work.

We show the first almost-linear-time directed minimum cut algorithm by giving a near-linear-size reduction to maximum flow. Throughout, we assume that the input directed graph $G=(V,E)$ has $n$ vertices and $m\geq n$ edges with positive \emph{integral} weights $\{w_e\}_{e\in E}$ at most $W$ and $m,W=\poly(n)$; $\widetilde O$ hides polylogarithmic factors.
\begin{theorem}\label{thm:main-result}
Directed minimum cut can be computed with high probability using maximum-flow computations on graphs of total size $\widetilde O(m)$, with $\widetilde O(m)$ additional time.
\end{theorem}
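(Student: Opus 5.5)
We plan to follow the framework of Cen et al.~\cite{cen2022minimum}. Fix a root vertex $r$. The global directed minimum cut is either an out-cut of $r$ (a set $S\ni r$ minimizing $w(\partial^+S)$) or an in-cut of $r$. The in-cut case reduces to the out-cut case by reversing all edges, so we only handle out-cuts. Suppose we are given an arborescence $T$ rooted at $r$ such that exactly one edge of $T$ leaves the minimum cut $S^*$. Then the \cite{cen2022minimum} reduction finds $S^*$ with maximum-flow calls on graphs of total size $\widetilde O(m)$. It does so by a centroid/heavy-path style recursion over $T$, in which each vertex and edge appears in $O(\log n)$ flow instances. We take this reduction as given. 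What remains is to produce, in $\widetilde O(m)$ time plus flows of total size $\widetilde O(m)$, an arborescence that $1$-respects $S^*$ with constant probability. Repeating $O(\log n)$ times then gives high probability.

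To build such an arborescence we avoid computing a $(1+\eps)$-approximate arborescence packing. Instead we use an iterative sampling scheme in the spirit of multiplicative weights.
\begin{enumerate}
\item Start with uniform edge lengths $\ell_e = 1/w_e$.
\item In round $i$, compute a minimum-length arborescence $T_i$ rooted at $r$. This is a min-cost arborescence, computable in $\widetilde O(m)$ time by Gabow--Galil--Spencer--Tarjan.
\item Multiply the length of each edge of $T_i$ by $(1+\eps)$, scaled by $1/w_e$.
\end{enumerate}
After $O(\log n)$ rounds we output a uniformly random $T_i$. The analysis uses the potential $\Phi=\sum_e w_e\ell_e$. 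By Edmonds' theorem, the value $\lambda$ of the minimum out-cut equals the maximum arborescence packing. Weak LP duality therefore gives that the minimum arborescence length is at most $\Phi/\lambda$, so each round increases $\Phi$ by at most a factor $(1+\eps/\lambda)$.

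On the other hand, the lengths of the $\lambda$ edges (weighted) leaving $S^*$ grow every time a tree crosses $S^*$ more than once. Every arborescence crosses $S^*$ at least once, so the excess crossings are what push the potential up. Comparing the growth of $\Phi$ with the growth on $\partial^+S^*$ gives the key claim: the average number of crossings over the rounds is at most $1+O(\eps)+O(\log n)/\#\text{rounds}$. By Markov's inequality applied to the excess crossings, a random $T_i$ crosses exactly once with constant probability once $\eps$ is a small constant.

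We expect the main obstacle to be this counting step. The number of rounds must be only $O(\log n)$, not $O(\log n/\eps^2)$ with weight-dependent width. The standard MWU width is the maximum number of times a tree crosses a cut, which can be as large as $n$. To handle this I would first try sparsifying and normalizing weights so that $\lambda$ is polynomially controlled. I would then use a random contraction/sampling that gives $\lambda=\Theta(\log n)$ after subsampling edges with probability $\widetilde\Theta(1/\lambda)$, as in Karger's skeleton argument applied to out-cuts. After this step all cuts near minimum are preserved within $(1\pm\eps)$ with high probability. The potential argument then needs only $O(\log n)$ rounds, with lengths updated multiplicatively per unit weight. If width still causes trouble, I would instead analyze the process as a randomized rounding in which each round's tree is sampled proportionally to its length slack. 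In that version, the probability of crossing $S^*$ twice decays geometrically across the $O(\log n)$ rounds.
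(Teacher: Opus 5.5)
\textbf{There is a genuine gap: the round count.} With the update you describe ($\ell_e$ grows by a factor $(1+\eps)^{1/w_e}$ per use, starting from $\ell_e=1/w_e$), your potential argument gives $\Phi_R\le m\,e^{\eps R/\lambda}$. Jensen over the edges of $\delta^-(S^*)$ then yields only
\[
\sum_{i=1}^{R}|T_i\cap\delta^-(S^*)|\;\le\;(1+O(\eps))R+O(\lambda\log m/\eps).
\]
So the average crossing number is $1+O(\eps)+O(\lambda\log m/(\eps R))$, not $1+O(\eps)+O(\log n)/R$. The factor $\lambda$ is exactly the width. The guarantee therefore becomes nontrivial only after $\Omega(\lambda\log m/\eps)$ min-cost arborescence computations, i.e.\ $\widetilde O(m\lambda)$ time, which is the packing-based bound the theorem is supposed to beat.

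Your remedy, a Karger-type skeleton with $\lambda=\Theta(\log n)$, is not available for directed rooted cuts. They admit no polynomial cut-counting bound, and uniform sampling can destroy rooted connectivity outright. Take vertices $r,z,v_1,\dots,v_N$ with unit arcs $(r,v_j)$ and $(v_j,z)$, and arcs $(z,v_j)$ of weight $N$. Every $S\subseteq V\setminus\{r\}$ containing $z$ has $\bw(\delta^-(S))=N$, since for each $j$ exactly one of $(r,v_j),(v_j,z)$ enters $S$. Every nonempty $S\subseteq V\setminus\{r\}$ avoiding $z$ has weight at least $N+1$. So $\lambda=N$, with $2^N$ minimum cuts. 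Now keep each unit of weight with probability $p=\Theta(\log n/N)$. Then $r$ reaches $z$ only if both $(r,v_j)$ and $(v_j,z)$ survive for some $j$. This happens with probability at most $Np^2=o(1)$ once $N\gg\log^2 n$, so the skeleton has rooted connectivity $0$ with high probability. Your fallback, sampling trees ``proportionally to length slack'' so that double crossings decay geometrically, states the desired outcome without a mechanism that produces it.

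\textbf{What the paper does instead.} That geometric decay is what the paper proves, but with no packing and hence no width. It uses a $5/4$-approximation to fix $L\le\lambda<4L/3$ and sets $\bq(e)=w_e/L$, which dominates a point of $\calA_r$ by Edmonds' theorem. It then proves a marginal-contraction lemma: from the current arborescence $T$, one max-flow call yields a random $T'$ with $\EE[\mathbf 1_{T'}\mid T]\le(\mathbf 1_T+\bq)/2$. The flow network is built as follows.
\begin{itemize}
\item Each non-root $v$ has a source $s_v$ sending one unit.
\item For each edge $e$ entering $v$, an arc $s_v\to y_e$ has capacity $\bq(e)$.
\item Each tree edge $f$ has a sink $t_f$ of demand $1$, reached by uncapacitated arcs $y_e\to t_f$ for $f$ on the fundamental path of $e$ in $T$.
\item The network is compressed via heavy-light decomposition and segment trees.
\end{itemize}
Hall's theorem gives feasibility. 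A counting argument shows that any demand-satisfying flow $\mathbf h$, even ignoring capacities, makes $\mathbf 1_T+\mathbf h$ rooted $2$-connected. Randomized flow rounding then gives an integral $\mathbf f$ with $n-1$ edges and $\EE[\mathbf f]\le\bq$. Tarjan's algorithm splits $\mathbf 1_T+\mathbf f$ into two arborescences, and one is kept uniformly at random. After $t=\lceil\log_2(16n)\rceil$ rounds, the expected crossing number of $S^*$ is at most $\lambda/L+2^{-t}(n-1)<3/2$, independent of $\lambda$. This exchange-and-split step is the idea missing from your proposal.
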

Using almost-linear time maximum flow~\cite{vandenbrand2023deterministic}, the resulting algorithm runs in $m^{1+o(1)}$ time.

As applications, we immediately obtain the first almost-linear-time algorithms for computing a second $(s,t)$-mincut and the \emph{value} of arboricity. Baswana, Bhanja, and Roy~\cite{baswana2025second} reduce second $(s,t)$-mincut---the smallest $(s,t)$-cut whose capacity is strictly greater than the $(s,t)$-mincut---to maximum flow and directed mincut, with linear total instance size. Cen et al.\ \cite{cen2025arboricity} reduce computing the arboricity value---the minimum number of forests needed to cover all edges of an undirected graph---to logarithmically many maximum-flow and directed-mincut computations on graphs of linear size.

Our key technique is an algorithm for updating a random arborescence until it crosses the mincut only once, with constant probability. This is sufficient to obtain a mincut by a known reduction of \cite{cen2022minimum}.
\Cref{sec:tree-sampling} states this key technique and proves \Cref{thm:main-result}. \Cref{sec:implementing-sampling} shows how to implement the technique assuming one flow subroutine, which is then proved in \Cref{sec:flow-sub}.

%% file: tree-sampling-to-mincut.tex
\section{Mincut via Iterative Arborescence Sampling}\label{sec:tree-sampling}

\paragraph{Rooted minimum cut.}
Write $\delta^-(S)=E(V\setminus S,S)$ and $\bw(F)=\sum_{e\in F}w_e$. A \emph{rooted cut} is a nonempty set $S\subseteq V\setminus\{r\}$. The weight of \emph{rooted mincut} is denoted by $\lambda=\min_S\bw(\delta^-(S))$. Directed global mincut is the smaller of the rooted mincuts in $G$ and its edge reversal, for any fixed $r$: a global optimum either excludes $r$, or its complement is a rooted cut in the reversed graph. We therefore focus on rooted mincut.

\paragraph{Sufficient goal: $1$-respecting arborescence.}
An \emph{$r$-arborescence} is a spanning tree whose edges point away from $r$. It \emph{$1$-respects} a rooted cut $S$ if $|T\cap\delta^-(S)|=1$. Every arborescence crosses each rooted cut at least once. The following reduces the problem to finding a $1$-respecting arborescence.

\begin{theorem}[Cut recovery, {\cite[Theorem 2.7]{cen2022minimum}}]\label{thm:1-respect-reduct}
Given an $r$-arborescence $T$, a valid rooted cut can be found using $O(\log n)$ maximum-flow computations on $O(m)$-size graphs, with $\widetilde O(m)$ additional time. If $T$ $1$-respects a rooted mincut, the returned cut is a rooted mincut.
\end{theorem}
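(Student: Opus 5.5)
We prove the statement by reducing it to a family of $(s,t)$-mincut problems in the style of Karger's $1$-respecting tree argument. The work splits into three steps: a structural reformulation, a divide-and-conquer over the tree, and a final check that the output is always a genuine rooted cut.

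\textbf{Structural reformulation.} Suppose $T$ $1$-respects a rooted mincut $S$, with crossing edge $(u,v)$. Every vertex of $S$ is reached from $r$ along its tree path, and that path enters $S$ exactly once. Hence $S$ contains the whole subtree $T_v$. Moreover, no tree path leaves $S$ and re-enters it, so $S$ is a union of $T_v$ together with vertices whose tree path passes through $v$; that is, $S=T_v$ exactly. This is the standard observation. It holds for the arborescence because each vertex's tree path starts at $r\notin S$, and a vertex $x\in S$ outside $T_v$ would force a second entering tree edge. Consequently the candidate set is only the $n-1$ subtrees $\{T_v\}$, plus any cut we find by flow.

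Two caveats are needed. Computing $\bw(\delta^-(T_v))$ for all $v$ simultaneously in $\widetilde O(m)$ time is possible using LCA and subtree sums. However, the cited theorem promises maximum-flow calls. This suggests the intended statement is robust: it wants a minimum among cuts $S$ with $S\supseteq T_v$ in some structured sense, or the authors of \cite{cen2022minimum} use flows for a more general guarantee.

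\textbf{Divide and conquer.} My plan follows \cite{cen2022minimum}. Take a centroid-style (or heavy-path) decomposition of $T$ into $O(\log n)$ levels. At each level, for a family of disjoint subtrees, we want $\min_v \bw(\delta^-(S_v))$, where $S_v$ ranges over rooted cuts that $1$-respect $T$ at $v$. The key step is an isolating-cuts reduction. Within one level, contract the relevant pieces and solve $O(1)$ maximum-flow instances whose total size is $O(m)$: source $r$, sinks given by the subtree roots grouped by bit-splitting, each group needing one flow. Then read off, for each $v$, the minimum $(r, T_v)$-cut restricted to its region. Repeating over $O(\log n)$ levels and bits gives $O(\log n)$ flows on $O(m)$-size graphs.

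\textbf{Validity.} Every output is the sink side of some $(r,\cdot)$-cut, which is automatically a valid rooted cut. Its weight is at least $\lambda$, and it equals $\lambda$ whenever $T_v$ for the true crossing vertex $v$ lies in one group. This holds because the minimum $(r,T_v)$ cut has value at most $\bw(\delta^-(T_v))=\lambda$.

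The main obstacle is the isolating step. We must ensure that the regions for different $v$ in a group are disjoint, so that the minimum cut separating $r$ from $T_v$ can be localized. This needs a submodularity/uncrossing argument showing that the minimal $r$-to-$T_v$ mincut sink sides for distinct groups can be taken disjoint. I would handle it with the standard isolating-cuts lemma applied to the tree-decomposed terminals. Given the simple structural fact, I would first try simply computing all $\bw(\delta^-(T_v))$ directly. This step is easy, and it already yields the theorem with zero flows needed.
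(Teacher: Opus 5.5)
\paragraph{Gap: the structural reformulation is false.}
Your claim that a $1$-respected rooted cut must equal a subtree $T_v$ does not hold, and every later step inherits the error. Suppose $T$ $1$-respects a rooted cut $S$ via the tree edge $(u,v)$. You correctly get that every tree path from $r$ into $S$ enters through $(u,v)$ and never leaves and re-enters $S$. Hence $v\in S\subseteq T_v$, and $S$ is connected in $T$. Nothing forces $T_v\subseteq S$, though. Tree edges \emph{leaving} $S$ are not counted by $\delta^-(S)$, so $S$ can be a proper top portion of $T_v$.

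A concrete counterexample: take vertices $r,a,b$, tree edges $(r,a),(a,b)$ of weight $1$, and a non-tree edge $(r,b)$ of weight $10$. Then $S=\{a\}$ is the rooted mincut with $\bw(\delta^-(S))=1$, and $T$ $1$-respects it. Yet $\bw(\delta^-(T_a))=\bw(\delta^-(T_b))=11$. So computing all $\bw(\delta^-(T_v))$ does not prove the theorem. Your validity step also fails: it relies on ``the minimum $(r,T_v)$ cut has value at most $\bw(\delta^-(T_v))=\lambda$'', and forcing all of $T_v$ onto the sink side is the wrong constraint. The puzzle you noticed, namely why maximum flows would be needed at all, is exactly the symptom of this mistake.

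\paragraph{What is actually needed.}
The correct per-vertex quantity is $\min\{\bw(\delta^-(S)) : v\in S\subseteq T_v\}$. This is a minimum cut from the contracted set $V\setminus T_v$ to the single vertex $v$. The real difficulty is batching these flow problems over \emph{nested} subtrees. Your isolating-cuts step needs disjoint regions, as you note yourself, and you leave it unresolved, so it does not address this.

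One route that does work is a centroid decomposition of $T$:
\begin{itemize}
\item Consider a subproblem on a $T$-connected set $X$, with everything outside $X$, together with $r$, contracted into a root. Pick a centroid $c$ of $T[X]$.
\item Candidates containing $c$ are dominated by a single minimum (root, $c$)-cut. Its sink side is always a valid rooted cut.
\item Candidates avoiding $c$ are $T$-connected, so each lies inside one component of $T[X]-c$. Recurse on those components.
\item The sets $X$ at one level are disjoint. Keeping only edges whose head lies in $X$ gives total size $O(m)$ per level.
\item Each level can be batched into one max-flow call, and there are $O(\log n)$ levels.
\end{itemize}
The paper itself does not reprove this theorem; it imports it from \cite{cen2022minimum}.
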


\paragraph{Previous approach: Sampling once from arborescence packing.}
An $\alpha$-approximate fractional arborescence packing where $\alpha<2$  suffices: sampling a tree from the packing gives expected crossings at most $\alpha$, hence probability at least $2-\alpha$ of being $1$-respecting. 
Cen et al.\ \cite{cen2022minimum} compute $(1+\eps)$-approximate packings, but their direct implementation takes $\widetilde O(m\lambda)$ time on unweighted graphs, which is too slow for us. Jiang et al.\ \cite{jiang2026approximating} obtain an almost-linear time packing but with $n^{o(1)}$ congestion, insufficient for this purpose. An almost-linear time algorithm for exact or $(1+\eps)$-approximate fractional arborescence packing remains an interesting open problem.

\paragraph{Our approach: Iterative arborescence sampling.}
We instead improve the marginals of one random tree. Let
$\calA_r=\operatorname{conv}\{\mathbf1_T:T\text{ is an }r\text{-arborescence}\}$.
We say vector $\bq$ \emph{dominates} vector $\bx$ if $\bx \leq \bq$ coordinatewise; all vector inequalities below are similarly coordinatewise. 
\begin{restatable}[Marginal Contraction]{restatelem}{marginalcontraction}\label{lem:marginal-contraction}\label{lem:mixed-tree-sampling}
Given an $r$-arborescence $T$ and a nonnegative vector $\bq$ dominating some point of $\calA_r$, we can sample an $r$-arborescence $T'$ with
\[
  \EE[\mathbf1_{T'}\mid T]\leq\frac{\mathbf1_T+\bq}{2}
\]
using one maximum-flow computation on a $\widetilde O(m)$-size graph and $\widetilde O(m)$ additional time.
\end{restatable}
We emphasize that the feasible point $\bx \in \calA_r$ dominated by $\bq$ is neither given nor computed. Each update halves the excess of the inclusion probabilities above $\bq$. We prove the lemma in \Cref{sec:implementing-sampling,sec:flow-sub}.

\begin{proof}[Proof of \Cref{thm:main-result}]
Use Quanrud's approximation algorithm~\cite[Theorem 3.4]{quanrud2026approximating} to find a rooted cut of weight $C$ with $\lambda\leq C\leq5\lambda/4$, with high probability.\footnote{We can avoid calling Quanrud's algorithm by using standard binary search to find $C$ such that $\lambda\leq C\leq 5\lambda/4$.
}
It uses $O(\log^4 n)$ maximum-flow computations on $O(m)$-size graphs and $\widetilde O(m)$ additional time.
Set $L=4C/5$, so $L\leq\lambda<4L/3$. Now, define $\bq$ such that, for each edge $e$, $$\bq(e)=w_e/L.$$ By Edmonds' packing theorem~\cite{edmonds1973edge}, viewing weights as edge multiplicities gives $\lambda$ edge-disjoint arborescences. Their average $\bx\in\calA_r$ satisfies $x_e\leq w_e/\lambda\leq \bq(e)$.

Starting from any arborescence $T_0$, apply \Cref{lem:marginal-contraction} for $t$ rounds with fresh randomness. This gives
\[
  \mathbf p_t:=\EE[\mathbf1_{T_t}]
  \leq(1-2^{-t})\bq+2^{-t}\mathbf1_{T_0}.
\]
Fix a rooted minimum cut $S^*$, let $t=\lceil\log_2(16n)\rceil$, and write $X=|T_t\cap\delta^-(S^*)|$. Then
\[
  \EE[X]\leq\bq(\delta^-(S^*))+2^{-t}(n-1)
  =\frac{\lambda}{L}+2^{-t}(n-1)<\frac32.
\]
Since $X\geq1$ is an integer, $\EE[X]\geq2-\Pr[X=1]$, so $T_t$ $1$-respects $S^*$ with probability greater than $1/2$.

Repeat the above independently $O(\log n)$ times. Recover a cut from every final tree using \Cref{thm:1-respect-reduct} and return the minimum candidate. With high probability, one tree $1$-respects $S^*$, so the returned cut is a rooted mincut. 

Sampling and recovery use $O(\log^2 n)$ maximum-flow computations on $\widetilde O(m)$-size graphs and $\widetilde O(m)$ additional time. Together with Quanrud's algorithm, the aggregate size and additional time are $\widetilde O(m)$. 
\end{proof}

%% file: implementing-tree-sampling.tex
\section{Implementing the Marginal Contraction}
\label{sec:implementing-sampling}
We now prove \Cref{lem:marginal-contraction}, assuming one flow subroutine. 
\begin{restatable}[Support Mixing Flow]{restatelem}{newflowsub} \label{lem:new-flow-subroutine}
Let $T \subseteq G$ be an $r$-arborescence. Let $\bq: E \to \mathbb{R}_{\geq 0}$ such that there exists $\bx\in\calA_r$ where $\bx\leq\bq$.
There exists a randomized algorithm constructing an \emph{integral} flow $\mathbf{f} : E \to \mathbb{Z}_{\geq 0}$ such that:
\begin{enumerate}[label = (\alph*)]
    \item $\EE[\mathbf{f}] \leq \bq$, \label{cond:marg-pres}
    \item $\mathbf{f}(E) = n-1$, and \label{cond:edge-num}
    \item for all non-empty $S \subseteq V$ with $r \not\in S$, 
$(\mathbf{1}_T + \mathbf{f})(\delta^{-}(S)) \geq 2$. \label{cond:cut-lb}
\end{enumerate}
The algorithm uses one max flow call to a graph of size $\widetilde{O}(m)$, plus $\widetilde{O}(m)$ additional time.
\end{restatable}
We defer the proof of \Cref{lem:new-flow-subroutine}
to \Cref{sec:flow-sub}. Given \Cref{lem:new-flow-subroutine}, we can deduce~\Cref{lem:mixed-tree-sampling}. Let $G'$ be the multigraph with edge multiplicities $\mathbf{1}_T+\mathbf{f}$. By \Cref{lem:new-flow-subroutine}\ref{cond:cut-lb}, for each rooted cut $S \subseteq V$ (with $r \not\in S$), $G'$ has at least $2$ edges crossing into $S$. Then, by Edmonds' packing theorem \cite{edmonds1973edge}, $G'$ contains two edge-disjoint $r$-arborescences. Since \Cref{lem:new-flow-subroutine}\ref{cond:edge-num} implies that $G'$ is a graph on exactly $2n-2$ multi-edges, $G'$ is in fact a union of two edge disjoint $r$-arborescences. There are several known algorithms to efficiently compute this decomposition, e.g., Tarjan's algorithm:
\begin{theorem}[\cite{tarjan1976edge}] \label{thm:2-arbor-decomp}
Let $H = (V,E)$ be a directed graph and $s \in V$ fixed. There exists an $O(|E| \log |E|)$ time algorithm\footnote{Tarjan's algorithm actually has even better running-time, but we upper bound it by $O(|E| \log |E|)$ for simplicity.} for computing two edge-disjoint $s$-arborescences in $H$ if they exist.
\end{theorem}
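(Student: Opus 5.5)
We plan to follow Lovász's constructive proof of Edmonds' theorem, specialized to $k=2$, and then make it efficient with dominator-type reachability structures. Run an initial BFS from $s$ and check that every vertex has in-degree at least $2$ into every rooted set. Concretely, for every nonempty $X\subseteq V\setminus\{s\}$ we need $|\delta^-_H(X)|\geq 2$. By Menger's theorem this is equivalent to $s$ reaching every vertex after deleting any single edge. This condition is necessary, and we can test it with the edge-dominator tree of $H$ from $s$ in near-linear time. If it fails, we report that no two edge-disjoint $s$-arborescences exist.

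Next we grow the first arborescence $A$ with vertex set $R\ni s$, one edge at a time. We maintain the following invariant:
\[
|\delta^-_{H\setminus A}(X)|\geq 1 \quad\text{for every nonempty } X\subseteq V\setminus\{s\} \text{ with } X\not\subseteq R,
\]
and also $|\delta^-_{H}(X)\setminus A|\geq 1$ for $X\subseteq R$. The second condition holds automatically: $X\subseteq R\setminus\{s\}$ has exactly one in-edge in $A$, namely the tree edge entering the topmost vertex of $X$, and it had at least two in-edges in $H$. After contracting $R$ into $s$, the invariant says exactly that $s$ reaches every vertex of $H\setminus A$. Call an edge $uv$ with $u\in R$, $v\notin R$ \emph{safe} if adding it to $A$ preserves the invariant. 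By the invariant, this means $uv$ is not the unique entering edge of any set $X\not\subseteq R\cup\{v\}$ in $H\setminus A$.

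We prove that a safe edge always exists by Lovász's uncrossing argument. Call a set $X$ with $X\not\subseteq R$ and $|\delta^-_{H\setminus A}(X)|=1$ \emph{tight*}. Intersecting tight* sets with the complement of $R$ remain tight by submodularity of $|\delta^-|$. Take a minimal tight* set $X$, or $X=V\setminus R$ if none exists. Then $X$ must have an entering edge from $R$ in $H$ that is not in $A$: otherwise every entering edge would come from outside $R$, and combining this with $|\delta^-_H(X)|\geq 2$ would contradict minimality. Choosing such an edge into $X$ is safe. When $R=V$, the edges of $H\setminus A$ still satisfy reachability from $s$, so a BFS in $H\setminus A$ yields the second arborescence.

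The main obstacle is the running time. A naive implementation recomputes the safe edges by a reachability or dominator computation after each of the $n-1$ additions, costing $O(nm)$. To reach $O(|E|\log|E|)$, we would maintain the edge-dominator tree of the contracted graph $(H\setminus A)/R$ incrementally. Each addition of an edge to $A$ both removes an edge and contracts a vertex into the root. We would process the minimal tight sets, which are the subtrees below edge-dominators, in an order computed from a single dominator tree: repeatedly choose a safe edge entering the deepest dominated subtree. We charge the restructuring to edges using mergeable heaps or union-find with link-by-size, which gives the $\log$ factor. If this bookkeeping proves too delicate, the fallback is to cite Tarjan's original branching-based implementation directly, as the paper does.
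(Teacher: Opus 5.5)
The paper gives no proof of this statement; it cites Tarjan as a black box. Your attempt at a self-contained proof has a genuine gap already in the combinatorial (Lovász) part. The claim that the second invariant ``holds automatically'' is false. A set $X\subseteq R\setminus\{s\}$ is entered by one edge of $A$ per component of $A[X]$, not by exactly one (take two siblings), so $\delta^-_H(X)\setminus A$ can be empty. Hence your safety test, which only inspects sets $X\not\subseteq R\cup\{v\}$, is wrong. So is the reformulation via contracting $R$ into $s$. Since $A\subseteq H[R]$, every $X$ disjoint from $R$ has $|\delta^-_{H\setminus A}(X)|=|\delta^-_H(X)|\ge 2$, so the only sets that can become tight meet $R$ and avoid $s$---exactly the sets that contraction erases. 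Concretely, let $V=\{s,a,b\}$ with edges $sa,sb,ab,ba$; every rooted set has in-degree $2$. After $A=\{sa\}$, both $sb$ and $ab$ pass your test (as $R\cup\{b\}=V$). But choosing $sb$ leaves $H\setminus A=\{ab,ba\}$, in which $s$ reaches nothing. The correct invariant is simply that $s$ reaches every vertex of $H\setminus A$ (uncontracted), and safety must be checked against all nonempty $X\subseteq V\setminus\{s\}$.

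The existence step also picks the wrong edge. Write $\rho(Y)=|\delta^-_{H\setminus A}(Y)|$. In the example the minimal tight set is $X=\{a,b\}$, whose unique entering edge is $sb$, and your rule ``take an edge from $R$ entering $X$'' selects precisely this unsafe edge. Your claim that tight sets stay tight after intersecting with $V\setminus R$ also fails, since by the observation above no tight set avoids $R$. Lovász's step instead takes a minimal tight $X$ with $X\setminus R\neq\emptyset$ and uses an edge \emph{inside} $X$. As $|\delta^-_H(X\setminus R)|\ge 2$ and at most one of these edges comes from $V\setminus X$, some $uv\in H\setminus A$ has $u\in X\cap R$ and $v\in X\setminus R$. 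If $uv$ were the unique entering edge of some $Z$, then $2=\rho(X)+\rho(Z)\ge\rho(X\cap Z)+\rho(X\cup Z)\ge 2$ makes $X\cap Z$ tight; it contains $v$ but not $u$, contradicting minimality. (If no tight set meets $V\setminus R$, any edge from $R$ to $V\setminus R$ works.)

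Even after these repairs you only have the $O(nm)$-type algorithm: recompute the edge dominators of $H\setminus A$ after each step and take a non-dominator edge leaving $R$. The incremental maintenance that would give $O(|E|\log|E|)$ is only gestured at, and it is again phrased via the contracted graph and ``entering'' edges. So the stated time bound rests entirely on your fallback citation of Tarjan, which is all the paper itself does.
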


The algorithm for \Cref{lem:mixed-tree-sampling} is then simple; see \Cref{alg:tree-sampler}.
\begin{algorithm}[h]
\LinesNotNumbered
\caption{Tree Sampling Algorithm}
\label{alg:tree-sampler}

\begin{enumerate}[leftmargin=*]
        \item \label{step:flow}
        Apply \Cref{lem:new-flow-subroutine} to compute $\textbf{f}$.
        \item \label{step:decomp}
        Let $G'$ be the multigraph with edge multiplicities
        $\mathbf{1}_T+\mathbf{f}$.
        Apply \Cref{thm:2-arbor-decomp} to partition its edges
        into two edge-disjoint $r$-arborescences, $T_1$ and $T_2$.
        \item Choose $i\in\{1,2\}$ uniformly at random and return 
        $T_i$.
        \end{enumerate}
\end{algorithm}

Let $T'$ be the arborescence outputted by \Cref{alg:tree-sampler}. To deduce \Cref{lem:mixed-tree-sampling}, it remains to show that 
\[
\EE[\mathbf{1}_{T'}] \leq \frac{\mathbf{1}_T + \bq}{2}.
\]
For this, note that, since we sample $T_1$ or $T_2$ uniformly at random, we have
\begin{align*}
\EE[\mathbf{1}_{T'}] &= \frac{1}{2} \EE[\mathbf{1}_{T_1} + \mathbf{1}_{T_2}] \\
&= \frac{1}{2} \left(\mathbf{1}_{T} + \EE[\mathbf{f}]\right) \\
&\leq \frac{\mathbf{1}_{T} + \bq}{2}.
\end{align*}
The last inequality uses \Cref{lem:new-flow-subroutine}\ref{cond:marg-pres}.

%% file: flow-routine.tex
\section{Support Mixing Flow Subroutine} \label{sec:flow-sub}
In this section, we prove \Cref{lem:new-flow-subroutine}, the last remaining detail in our argument. We recall the statement below.
\newflowsub*
The proof goes as follows: We first construct a flow network $H_T'$ that gives a \emph{fractional} flow $\mathbf{h} \leq \mathbf{q}$ satisfying \Cref{lem:new-flow-subroutine}\ref{cond:edge-num} and \Cref{lem:new-flow-subroutine}\ref{cond:cut-lb}. We then perform flow rounding as in \Cref{thm:flow-rounding} and show the rounded flow will satisfy all conditions in \Cref{lem:new-flow-subroutine}. 

In order to give intuition for the flow network construction, we first describe a simple flow network $H_T$ which has all the desired properties, except it is too large to be used efficiently. We then describe how to emulate this flow network on a much smaller graph $H_T'$ using heavy-light decompositions and segment trees.

\subsection{Flow Network Construction: Warmup}

The flow network $H_T$ uses three types of vertices:
\begin{itemize}
    \item a ``source'' vertex $s_v$ for each original vertex $v \in V \setminus \{r\}$, 
    \item a vertex $y_e$ for each original edge $e \in E$, and
    \item a ``sink'' vertex $t_f$ for each tree edge $f \in E(T)$.
\end{itemize}
For each edge $e = (u,v) \in E$, let $P_T(e)$ be the set of tree edges on the unique \emph{undirected} $u$-$v$ path in $T$. This is the rest of the fundamental cycle formed by adding $e$ to $T$. 

The graph $H_T$ also has the following edges for each $(u,v) = e \in E$:
\begin{itemize}
    \item a corresponding edge $a_e = (s_v, y_e)$ of capacity $\bq(e)$, and
    \item for each $f \in P_T(e)$, an infinite capacity auxiliary edge $a_{(e,f)} = (y_e, t_f)$. 
\end{itemize}
Finally, we define our vertex demand as follows:
\begin{itemize}
    \item for each non-root node $v \in V \setminus \{r\}$, we set $\bd(s_v) = -1$;
    \item for each $f\in E(T)$, we set $\bd(t_{f}) = 1$.
\end{itemize}
All other nodes have demand $0$. Negative demand means that the vertex ``demands'' to send that much net flow out; conversely, positive demand means the vertex ``demands'' to receive that much net flow in. See \Cref{fig:flow_network}.
\begin{figure}
\centering
\includestandalone[mode=tex,width=0.8\textwidth]{flow-network}
\caption{Illustration of the flow network $H_T$.}
\label{fig:flow_network}
\end{figure}

The flow network has a nice interpretation in terms of $r$-arborescences. We define $\bh := \bg_0|_E$ as a flow in the original graph where $\bh(e) = \bg_0(a_e)$. The flow $\bh$ can be viewed as a sort of fractional arborescence in $G$. Indeed, without the capacity constraints on edges $a_e$, for each $(u,v) = f \in E(T)$, we could just route $1$ unit of flow from $s_v$ to $y_f$ to $t_f$. The flow $\bg_0$ corresponds to $\bh = 1_T$ and recovers $T$. 

The capacity constraints on $a_e$, however, force $\bh$ to have substantially different support than $T$. By placing the capacity constraints $\bq(e)$ along each $a_e$, the unit of indegree of each node is forced to be spread along various edges. This is where the additional auxiliary edges come in. As a thought experiment, consider the problem of attempting to construct a new arborescence from $T$ where edge $f \in T$ is replaced by some other edge $e \in E$. For $(T \cup \{e\}) \setminus \{f\}$ to remain spanning, $f$ must belong to $P_T(e)$, and each $f \in P_T(e)$ gives a valid swap. The flow network implements these swaps \emph{fractionally}: the amount of flow from $y_e$ to $t_f$ represents a fraction that $e$ substitutes for $f$.

\subsection{Demand-Satisfying Flows are Good}

Now suppose a flow $\bg_0$ in $H_T$ satisfies $\bd$ and let $\bh = \bg_0|_E$. We first aim to obtain a fractional version of \Cref{lem:new-flow-subroutine}: First, $\bh$ clearly satisfies $\bh \leq \bq$ due to the edge capacity constraints, which gives a fractional variant of \Cref{lem:new-flow-subroutine}\ref{cond:marg-pres}. Second, since $\bg_0$ in $H_T$ satisfies $\bd$ with positive demand $n-1$, we have $\bh(E) = n-1$ which gives \Cref{lem:new-flow-subroutine}\ref{cond:edge-num}. Thus, the remaining goal is to show that $\bh$ satisfies \Cref{lem:new-flow-subroutine}\ref{cond:cut-lb}. We show that, by combining $\bh$ with the original tree $T$, one achieves a rooted connectivity of 2 as follows:


\begin{claim}
\label{clm:2conn}
Given a nonnegative flow $\bg_0$ on $H_T$ (even with capacity constraints \ref{cond:marg-pres} dropped) satisfying $\bd$, the corresponding flow $\bh := \bg_0|_E$ in $G$ satisfies $(\mathbf{1}_T + \mathbf{h})(\delta^{-}(S)) \geq 2$ for any non-empty set $S \subseteq V$ with $r \notin S$.
\end{claim}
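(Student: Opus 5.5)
The plan is a case split on how many tree edges enter $S$, followed by a counting argument on the flow $\bg_0$ in $H_T$. Fix a non-empty $S\subseteq V$ with $r\notin S$. Since $T$ is an $r$-arborescence, $\mathbf1_T(\delta^-(S))\geq 1$. Because $\bh\geq 0$, we are done whenever $\mathbf1_T(\delta^-(S))\geq 2$. So we may assume exactly one tree edge $f^*=(u^*,v^*)$ enters $S$, and it remains to show $\bh(\delta^-(S))\geq 1$.

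\textbf{Structure of $T$ inside $S$.} Every $v\in S$ has a unique incoming tree edge. Exactly one of these, namely $f^*$, has its tail outside $S$. Hence $T[S]$ consists of $|S|-1$ tree edges, and every vertex of $S$ other than $v^*$ has its tree parent in $S$. Following parent pointers from any $v\in S$ therefore stays in $S$ until it reaches $v^*$, so $T[S]$ is a single arborescence rooted at $v^*$ spanning $S$, and in particular it is connected as an undirected tree. Consequently, for any edge $e=(u,v)\in E$ with both endpoints in $S$, the undirected $u$-$v$ path $P_T(e)$ lies entirely inside $T[S]$. So every auxiliary edge $a_{(e,f)}$ leaving such a $y_e$ goes to a sink $t_f$ with $f\in E(T[S])$.

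\textbf{Counting flow.} In $H_T$:
\begin{itemize}
\item each source $s_v$ has only outgoing edges, namely the $a_e$ for edges $e$ entering $v$;
\item each $y_e$ has demand $0$, with a single incoming edge $a_e$ and outgoing edges only to sinks;
\item each sink $t_f$ has no outgoing edges and net inflow exactly $1$.
\end{itemize}
The sources $s_v$ with $v\in S$ send out a total of $|S|$ units, and all of it goes through the vertices $y_e$ with $e$ entering some $v\in S$. Split these edges into those internal to $S$ and those in $\delta^-(S)$. By conservation at $y_e$, the flow through internal $y_e$'s reaches only sinks $t_f$ with $f\in E(T[S])$. These sinks absorb at most $|S|-1$ units in total, since each receives exactly one unit. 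Therefore
\[
\sum_{e\in E(S)}\bg_0(a_e)\leq |S|-1 ,
\]
and so
\[
\bh(\delta^-(S))=\sum_{e\in\delta^-(S)}\bg_0(a_e)\geq |S|-(|S|-1)=1 .
\]
This gives $(\mathbf1_T+\bh)(\delta^-(S))\geq 2$. Capacities are never used, consistent with the claim holding even without them.

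\textbf{Main obstacle.} The delicate step is the structural one: showing that a single entering tree edge forces $T[S]$ to be connected, so that fundamental paths of internal edges stay internal. The rest is bookkeeping with flow conservation. It relies only on the facts that sinks have no out-edges and that $y_e$ is fed solely by $a_e$, so no path-decomposition theorem is needed.
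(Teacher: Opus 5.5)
Your proof is correct and follows the paper's argument: reduce to the case where exactly one tree edge enters $S$, observe that the sources $s_v$ for $v\in S$ send exactly $|S|$ units, and note that the flow through $y_e$ for edges internal to $S$ can only reach the $|S|-1$ sinks of $T[S]$, which gives $\bh(\delta^-(S))\geq 1$. Your parent-pointer argument that $T[S]$ is a single arborescence rooted at the head of the entering tree edge spells out a step the paper only asserts, which is a welcome addition.
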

\begin{proof}
Consider an arbitrary nonempty set $S \subseteq V$ with $r \notin S$. When $T$ does not 1-respect $S$, we have $|\delta^-(S) \cap T| \geq 2$ and thus $(\mathbf{1}_T + \bh)(\delta^-(S)) \geq \mathbf{1}_T(\delta^-(S)) \geq 2$ since $\bh$ is nonnegative. Therefore, below we assume $|\delta^-(S) \cap T| = 1$; we then just need to show that $\bh(\delta^-(S)) \geq 1$. We have:
\[\bh(\delta^-(S)) = \bh(V \times S) - \bh(S \times S).\]
%
%
On the left side of the flow network $H_T$, since $\bg_0$ satisfies demand $\bd$, the flow into each non-root vertex is $1$ under $\bh$. Therefore, $\bh(V \times S) = \sum_{v \in S} \bh(\delta^-(\{v\})) = \sum_{v \in S} 1 = |S|$.

Next, since only one edge in $T$ belongs to $\delta^-(S)$, we know $T[S]$ spans $S$ and thus for any $e \in S \times S$, the path $P_T(e)$ must be a subset of $T[S]$. Therefore, on the right side of the flow network $H_T$, we have $\bh(S \times S) \leq |T[S]| = |S|-1$, as any flow going through $y_e$ where $e \in S \times S$ must flow into some vertex $t_f$ where $f \in P_T(e) \subseteq T[S]$, and each vertex $t_f$ has demand $1$.

Finally, by combining these two terms, we have $\bh(\delta^-(S)) = \bh(V \times S) - \bh(S \times S) \geq |S| - (|S|-1) = 1$, which suffices for our proof.
\end{proof}

\begin{wrapfigure}{R}{0.5\textwidth}
\centering
\includestandalone[mode=tex]{arborescence-counterexample}
\caption{A graph with a demand-satisfying flow whose projection $\mathbf h$ is not a fractional
arborescence. Solid blue arcs form $T=\{f_1,f_2,f_3\}$, and arcs are labelled by $\mathbf h(e)$. Here $\lambda=2$ and $\mathbf q(e)=1/2$, but
$\mathbf h(\delta^-(\{v_1,v_2,v_3\}))=0$. A feasible auxiliary flow routes the two arcs of each pair
$\{v_1,v_2\}$, $\{v_2,v_3\}$, $\{v_1,v_3\}$ to
$t_{f_1}$, $t_{f_2}$, $t_{f_3}$, respectively.}
\label{fig:arborescence_counterexample}
\end{wrapfigure}

We make two notes for the above claim. If we were able to show that $\mathbf{h}(\delta^{-}(S)) \geq 1$ for any non-empty root-excluding set $S$, we could use it to significantly strengthen \Cref{lem:marginal-contraction} and obtain $T'$ where $\EE[1_{T'}\mid T]\le\bq$, instead of  $\EE[\mathbf1_{T'}\mid T]\leq\frac{\mathbf1_T+\bq}{2}$. Unfortunately, this is not true in general.  See \Cref{fig:arborescence_counterexample}.

Second, the above claim only requires the demand $\bd$ is routed, therefore, $(\mathbf{1}_T + \mathbf{h})(\delta^{-}(S)) \geq 2$ holds even when later $\bg_0$ is rounded to another flow $\widehat\bg$ routing $\bd$ that may not satisfy capacity constraints. We exploit this fact.


\subsection{Feasible Solutions Exist}

Next we show the existence of the flow $\bg_0$ routing the demand $\bd$
on $H_T$. This is the only step in which we require $\bq$ to dominate
some point $\bx \in \calA_r$.

\begin{claim}[Feasibility]
\label{clm:feasible}
If $\bq$ dominates some point $\bx \in \calA_r$, then there exists a
non-negative flow $\mathbf{g}_0$ on $H_T$ satisfying $\bd$ with
$\bg_0|_E = \bx$.
\end{claim}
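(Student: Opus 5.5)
We construct $\bg_0$ directly from a decomposition of $\bx$. Note that the claim's conclusion pins the flow on each $a_e$ to $\bx(e)$; since $\bx \le \bq$, this also respects the capacities, although \Cref{clm:2conn} does not need them.

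\textbf{Reduction to a single arborescence.} Write $\bx = \sum_i \lambda_i \mathbf{1}_{T_i}$ as a convex combination of $r$-arborescences $T_i$, with $\lambda_i \ge 0$ and $\sum_i \lambda_i = 1$. Demand-satisfying flows in $H_T$ form a convex set: the constraints are linear conservation constraints plus nonnegativity. Moreover, the restriction map $\bg \mapsto \bg|_E$ is linear. It therefore suffices to show that for each single $r$-arborescence $T'$ there is a nonnegative flow $\bg^{T'}$ in $H_T$ satisfying $\bd$ with $\bg^{T'}|_E = \mathbf{1}_{T'}$. Taking $\bg_0 = \sum_i \lambda_i \bg^{T_i}$ then gives $\bg_0|_E = \bx$ and demand $\sum_i \lambda_i \bd = \bd$.

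\textbf{Single arborescence: a bijective exchange.} Fix an $r$-arborescence $T'$. Each non-root $v$ has exactly one incoming edge $e_v \in T'$, so we send one unit along $a_{e_v} = (s_v, y_{e_v})$. This meets the demand $-1$ at each $s_v$ and puts flow exactly $\mathbf{1}_{T'}$ on the $a$-edges. It remains to forward the unit at each $y_e$, for $e \in T'$, to a distinct $t_f$ with $f \in P_T(e)$. Equivalently, we need a perfect matching between $E(T')$ and $E(T)$ in which $e$ is matched to some $f$ on its fundamental path, i.e.\ $f \in P_T(e)$.

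This is the classical bijective basis exchange property of the graphic matroid. Ignoring directions, $T$ and $T'$ are both spanning trees, i.e.\ bases of the cycle matroid of the underlying undirected graph. For bases $B, B'$ there is a bijection $\sigma: B' \to B$ such that, for every $e \in B'$, $\sigma(e)$ lies in the fundamental circuit of $e$ with respect to $B$. That fundamental circuit is $P_T(e) \cup \{e\}$, so $\sigma(e) = e$ when $e \in T$, and we need $e \in P_T(e)$ in that case. For $e \in T$, the tree path between its endpoints is the edge $e$ itself, so indeed $P_T(e) = \{e\}$ and the auxiliary edge $(y_e, t_e)$ exists. We then route the unit from $y_e$ to $t_{\sigma(e)}$ over the infinite-capacity edge, and every $t_f$ receives exactly one unit.

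\textbf{Main obstacle: proving the exchange bijection.} The bijection follows from Hall's theorem. For any $A \subseteq E(T')$, the union $\bigcup_{e \in A} P_T(e)$ must have size at least $|A|$. Let $F = \bigcup_{e \in A} P_T(e)$. Every edge of $A$ has its endpoints connected inside the forest $F$, so $A \subseteq \operatorname{span}(F)$. Since $A$ is acyclic, $|A| \le \operatorname{rank}(\operatorname{span}(F)) = |F|$. Hall's condition therefore holds, and a perfect matching exists. Alternatively, one can cite the known result (Brualdi) directly. A matching in the bipartite graph $\{y_e\} \times \{t_f\}$ could also be obtained by integral flow, but existence is all we need here.
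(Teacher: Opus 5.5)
Your proposal is correct and follows the paper's proof: the same reduction of $\bx$ to single arborescences via a convex combination, then a perfect matching between $E(T')$ and $E(T)$ along arcs $f\in P_T(e)$, obtained from Hall's theorem. The only difference is the side on which you check Hall's condition. You check it on the $E(T')$ side with a span/rank argument (the forest $\bigcup_{e\in A}P_T(e)$ spans the acyclic set $A$), whereas the paper checks it on the $E(T)$ side by noting that $T-F$ has $|F|+1$ components, which $R$ must connect; both are standard proofs of Brualdi's bijective exchange for graphic matroids.
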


\begin{proof}
It suffices to show that, for any $r$-arborescence $R$, there exists a
non-negative (integral) flow $\bg_R$ on $H_T$ satisfying $\bd$ with
$\bg_R|_E=\mathbf1_R$.
This is because $\bx\in\calA_r$ is a convex combination of the vectors
$\mathbf1_R$: taking the same convex combination of the flows $\bg_R$
gives a flow $\bg_0$ satisfying the same demand $\bd$ with
$\bg_0|_E=\bx\le\bq$, as required.

Fix an $r$-arborescence $R$, and let $H_{R,T}$ be the subgraph of $H_T$
induced by the vertices $y_e$ for $e\in E(R)$ on the left and $t_f$ for
$f\in E(T)$ on the right. In particular, $y_e\to t_f$ is an arc exactly
when $f\in P_T(e)$. We claim that $H_{R,T}$ has a perfect matching.
Such a matching gives the desired flow $\bg_R$: for each matched pair
$(y_e,t_f)$, where $e=(u,v)$, send one unit along
\[
s_v\longrightarrow y_e\longrightarrow t_f.
\]
This satisfies $\bd$ and has projection $\mathbf1_R$, since $R$ has
one edge entering each non-root vertex and the matching uses each
$t_f$ once. See \Cref{fig:feasibility_matching}.

To prove the perfect matching claim, we verify Hall's condition on the right.
For $F\subseteq E(T)$, let $N(F)\subseteq E(R)$ be the in-neighbors
of $F$ in $H_{R,T}$. Deleting $F$ from the underlying undirected tree $T$ leaves
$|F|+1$ connected components. The edges in $N(F)$ are exactly the
edges of $R$ crossing these components, since $P_T(e)$ meets $F$
exactly when the endpoints of $e$ lie in different components.
As $R$ connects these components, $|N(F)|\ge |F|$. Both sides have
$n-1$ vertices, so Hall's theorem gives a perfect matching.
\end{proof}

\begin{figure}[tbp]
\centering
\includestandalone[mode=tex,width=\textwidth]{feasibility-matching}
\caption{A perfect matching in $H_{R,T}$ gives the desired flow in
\Cref{clm:feasible}. Orange marks $F$ and its in-neighbors $N(F)$,
which connect the components of $T-F$.}
\label{fig:feasibility_matching}
\end{figure}

\subsection{Final Compressed Network Flow Construction and Flow Rounding}
The potentially large part of $H_T$ consists of the auxiliary edges $a_{(e,f)}=(y_e,t_f)$ for $f\in P_T(e)$. To construct the sparser representation $H_T'$, we replace
these edges using a heavy-light decomposition \cite{sleator1981data} of the
underlying tree $T$, rooted at $r$, in a standard way. The construction is as follows:
\begin{enumerate}
    \item Compute a heavy-light decomposition of the tree $T$.
The key property is that each path in $T$ can be decomposed into $O(\log n)$ heavy paths and $O(\log n)$ total intermediate light edges.
\item  For each heavy path in $T$, create a corresponding segment tree \cite{bentley1980optimal} out of Steiner nodes and infinite capacity edges. Steiner nodes are given demand $0$ and the leaves of the segment trees correspond to sinks $t_f$. Segment trees 
have the key property that any interval of tree edges can be represented as a disjoint union of $O(\log n)$ segment-tree nodes. In total, creating the
segment trees for all heavy paths requires $O(n)$ Steiner nodes and $O(n)$ infinite-capacity edges.
\item For each original edge $e \in E$, add infinite-capacity edges from $y_e$ to each node corresponding to a light-edge in the decomposition of $P_T(e)$ into heavy paths, and
add infinite-capacity edges from $y_e$ to the corresponding segment-tree nodes for each heavy path. This amounts to $O(\log^2 n)$ total edges for each $e \in E$.
\end{enumerate}

\begin{figure}
\centering
\includestandalone[mode=tex,width=\textwidth]{efficient-flow-network}
\caption{Efficient representation of the auxiliary edges for $e=(u,v)$.
Colored boxes group sinks by heavy path; orange marks $P_T(e)$ and its
reachable sinks. The node $y_e$ connects to a disjoint segment-tree cover
of each heavy interval and directly to light-edge sinks. For $B[1,3]$,
the cover consists of $[1,2]$ and the singleton sink $t^B_3$.
Internal nodes have demand zero; other branches and edge nodes are omitted.}
\label{fig:efficient_flow_network}
\end{figure}

See \Cref{fig:efficient_flow_network}. In total, the network $H_T'$ has $O(m + n)$ nodes and $O(m \log^2 n)$ edges. Moreover, the heavy-light decomposition and segment trees can be computed in $O(m + n)$ total time. We note that each $y_e$ can reach the same set of sinks $t_f$ in $H_T$ and $H_T'$, both via infinite capacity edges. Therefore, $H_T'$ effectively emulates $H_T$ in the following way: 

\begin{lemma}
   There exists a flow $\mathbf{g}$ in the compressed network $H_T'$ satisfying $\bd$ if and only if there exists a flow $\mathbf{g}_0$ in the network $H_T$ satisfying $\bd$, in each case with $\mathbf{g}|_E = \mathbf{g}_0|_E$. 
\end{lemma}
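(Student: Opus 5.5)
Both networks share the source vertices $s_v$, the edge vertices $y_e$, the capacitated arcs $a_e=(s_v,y_e)$, the sinks $t_f$, and the demands $\bd$. They differ only in the uncapacitated part joining the $y_e$'s to the $t_f$'s, and every Steiner node of $H_T'$ has demand $0$. The plan is therefore to convert flows on the uncapacitated part by path decomposition in one direction and by rerouting in the other. Throughout, the flow on each arc $a_e$ is left unchanged, which gives $\mathbf g|_E=\mathbf g_0|_E$ automatically.

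\emph{From $H_T$ to $H_T'$.} Take $\mathbf g_0$ satisfying $\bd$. For each arc $a_{(e,f)}$ with flow $\mathbf g_0(a_{(e,f)})>0$, we have $f\in P_T(e)$. By the construction of $H_T'$, $f$ is either a light edge of the decomposition of $P_T(e)$ or a leaf inside one of the $O(\log n)$ segment-tree nodes attached to $y_e$ for some heavy-path interval of $P_T(e)$. In either case there is a directed path of infinite-capacity arcs from $y_e$ to $t_f$ in $H_T'$, for example from $y_e$ to the covering segment-tree node and then down the segment tree to the leaf $t_f$. We route $\mathbf g_0(a_{(e,f)})$ units along this path and sum over all pairs $(e,f)$. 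Each path starts at $y_e$ and ends at $t_f$ and passes only through Steiner nodes in between. Hence conservation holds at every Steiner node, and the net flow at each $y_e$ and $t_f$ equals its value under $\mathbf g_0$. The arcs $a_e$ are copied verbatim, so $\bd$ is satisfied and capacities are untouched.

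\emph{From $H_T'$ to $H_T$.} Take $\mathbf g$ satisfying $\bd$ in $H_T'$. Restrict $\mathbf g$ to the subnetwork consisting of the $y_e$'s, the Steiner nodes and the sinks; this restriction is acyclic, since all arcs go from $y_e$ toward the leaves of the segment trees. Apply the standard flow decomposition there. The only sources are the $y_e$ nodes, with excess $\mathbf g(a_e)$, and the only sinks are the $t_f$ nodes. Conservation at Steiner nodes and acyclicity imply that the flow decomposes into paths from some $y_e$ to some $t_f$, with no cycles. Each such path witnesses that $t_f$ is reachable from $y_e$ in $H_T'$.

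The key step is then the reachability claim: $t_f$ is reachable from $y_e$ in $H_T'$ only if $f\in P_T(e)$. This holds because every arc out of $y_e$ enters either a light-edge sink of $P_T(e)$ or a segment-tree node whose leaf set is contained in a heavy interval of $P_T(e)$, and segment-tree arcs only descend toward leaves. Consequently each path in the decomposition can be replaced by the single arc $a_{(e,f)}$ of $H_T$ carrying the same amount. Summing these amounts gives $\mathbf g_0$, which has the same net flow at every $y_e$ and $t_f$ and the same values on the arcs $a_e$, so it satisfies $\bd$ with $\mathbf g_0|_E=\mathbf g|_E$.

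The main obstacle is this converse direction, since it requires both halves of the statement ``$y_e$ reaches exactly the sinks $t_f$ with $f\in P_T(e)$.'' I would verify it directly from the heavy-light path decomposition and the interval covers of the segment trees. I would also check that the segment-tree arcs are oriented parent-to-child, so that no flow can leak from one interval into a sibling interval.
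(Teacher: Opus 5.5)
Your proof is correct and takes essentially the same route as the paper: you keep the flow on each arc $a_e$ fixed, decompose the auxiliary part into weighted $y_e$--$t_f$ paths, and use that $y_e$ reaches $t_f$ in $H_T'$ exactly when $f\in P_T(e)$ to replace each path by a single arc (or each arc by a path). Your explicit acyclicity check (segment-tree arcs point parent-to-child and sinks have no out-arcs) just makes rigorous the path-decomposition step that the paper states without comment.
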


\begin{proof}
Note that one can decompose any flow in the auxiliary network into weighted paths from nodes $y_e$ to sinks $t_f$. By construction of the auxiliary network, there is a path from a node $y_e$ to sink $t_f$ in $H_T'$ if and only if $f \in P_T(e)$. Therefore, maintaining the flow on each $a_e$ and replacing each weighted path by a single edge (or vice versa) gives the translation between flows $\bg_0$ on $H_T$ and $\bg$ on $H_T'$. 
\end{proof}

Therefore, the compressed network $H_T'$ gives a safe emulation of $H_T$ and the notion of $\bh = \bg_0|_E = \bg|_E$ is consistent. Any $\bh$ coming from $\bg|_E$ can be viewed as coming from $\bg_0|_E$. The compression of $H_T$ into $H_T'$ will justify the total time complexity in \Cref{lem:new-flow-subroutine}. Below we give a full proof.

\begin{proof}[Proof of \Cref{lem:new-flow-subroutine}]
    From \Cref{clm:feasible}, since a feasible flow exists, we can compute such a flow $\mathbf{g}$ in $H_T'$ that routes the demand $\bd$. This uses one max flow call in the compressed network $H_T'$ with $\widetilde{O}(m)$ edges. Let $\bh = \bg|_E$, such a flow $\bh$ satisfies:
    \begin{enumerate}[label = (\alph*)]
        \item $\mathbf{h} \leq \bq$ by capacity constraints,
        \item $\mathbf{h}(E) = n-1$ as the total positive demand is $n-1$,
        \item for all nonempty $S \subseteq V$ with $r \not\in S$, 
    $(\mathbf{1}_T + \mathbf{h})(\delta^-(S)) \geq 2$ by \Cref{clm:2conn}.
    \end{enumerate}

    The only gap here compared to \Cref{lem:new-flow-subroutine} is our current flow $\bh$ is still fractional. To give an integral flow that satisfies the full condition, we apply the following flow rounding framework:

    \begin{theorem}[Flow rounding, \cite{kang2015flow}] \label{thm:flow-rounding}
    Let $H = (V,E, \bc)$ be a flow network and $\bd :V \to \ZZ$ a vertex demand. Let $\textbf{f} : E \to \RR$ be a flow satisfying $\bd$. Then, there exists an $O(|E| \log |V|)$-time randomized algorithm outputting integral flow $\widehat{\textbf{f}}$ also satisfying $\bd$ such that, for each $e \in E$,
    \begin{enumerate}
        \item $\EE[\widehat{\textbf{f}}_e] = \textbf{f}_e$, and 
        \item $\lfloor \textbf{f}_e \rfloor \leq \widehat{\textbf{f}}_e \leq \lceil \textbf{f}_e \rceil$.
    \end{enumerate}
    \end{theorem}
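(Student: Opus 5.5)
The plan is to round by a martingale of cycle cancellations on the fractional part of $\mathbf{f}$. This keeps every vertex's net flow fixed, so $\bd$ is preserved exactly. Each edge's value stays in $[\lfloor \mathbf{f}_e\rfloor,\lceil \mathbf{f}_e\rceil]$, and its expectation is unchanged.

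\textbf{Setup.} Call an edge \emph{fractional} if its current value is not an integer. Once an edge becomes integral it is frozen and never touched again. This gives the bound $\lfloor \mathbf{f}_e\rfloor \le \widehat{\mathbf{f}}_e \le \lceil \mathbf{f}_e\rceil$ automatically, provided each step moves a fractional edge only within its current open unit interval or onto one of its endpoints. Capacities play no role; the only constraints are the demand and the per-edge interval.

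\textbf{Cycle step.} Suppose the fractional edges, viewed as an undirected graph, contain a cycle $C$. Orient $C$ arbitrarily. Pushing $+\theta$ around $C$ adds $\theta$ to forward edges and subtracts it from backward edges, so the net flow at every vertex is unchanged. Let $\alpha>0$ be the largest push in the positive direction before some edge of $C$ hits an integer. Let $\beta>0$ be the largest push in the negative direction with the same property. With probability $\beta/(\alpha+\beta)$ push $+\alpha$; otherwise push $-\beta$. The expected change on every edge of $C$ is $\pm(\alpha\beta-\beta\alpha)/(\alpha+\beta)=0$. At least one edge becomes integral and is frozen, and no edge leaves its unit interval. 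By the tower property, $\EE[\widehat{\mathbf{f}}_e]=\mathbf{f}_e$.

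\textbf{Termination.} When the fractional edges form a forest, there are in fact none. Suppose some fractional edge remains, and take a leaf $v$ of a nontrivial tree in that forest. Exactly one fractional edge is incident to $v$, and all its other edges carry integers. But $v$'s net flow equals the integer $\bd(v)$, so that single edge would have to be integral, a contradiction. Each cycle step freezes at least one edge, so there are at most $|E|$ steps.

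\textbf{Running time.} This is the main obstacle: finding cycles naively costs $O(|V|)$ per step. I would maintain a spanning forest of fractional edges in a link-cut tree (Sleator--Tarjan), in which each tree edge stores its signed value. The algorithm processes the edges one at a time.
\begin{itemize}
\item If the new fractional edge joins two different trees, link it.
\item Otherwise it closes a cycle with the tree path between its endpoints. A path-min query, done separately for the two orientations, gives $\alpha$ and $\beta$. A path-add query then applies the push. Edges that become integral are cut. If the new edge itself becomes integral it is simply not inserted; otherwise it is inserted after the cut.
\end{itemize}
Each edge is linked and cut at most once, and each cycle step freezes an edge, so there are $O(|E|)$ operations of cost $O(\log |V|)$ each. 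This gives the stated $O(|E|\log|V|)$ time.

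The delicate bookkeeping is the per-direction minimum. Storing each edge's slack toward its ceiling and toward its floor, relative to the path orientation, handles this. Link-cut trees support evert and reversal via lazy flags, so both slacks can be queried in $O(\log|V|)$ time.
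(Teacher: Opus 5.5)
The paper does not prove this statement; it imports it from \cite{kang2015flow}. Your argument is correct and is essentially the randomized cycle-canceling argument that reference rests on: push probabilities $\beta/(\alpha+\beta)$ and $\alpha/(\alpha+\beta)$, a leaf argument for termination, and a link-cut forest of fractional edges for the $O(|E|\log|V|)$ bound. One piece of bookkeeping is worth making explicit: when several cycle edges hit integers at once, you must cut all of them, so that every forest edge stays strictly fractional and $\alpha,\beta>0$ in later steps. You can do this with repeated path-min queries on the split path segments, with the extra queries charged to the cuts.
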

    
    By flow rounding, we get a rounded $\widehat{\bg}$ from $\bg$ which is integral, demand satisfying, 
  and ensures $\mathbb{E}[\widehat{\bg}] = \bg$. By the property of flow rounding, here $\bf{f} = \widehat{\bg}|_E$ will satisfy:
    \begin{enumerate}[label = (\alph*)]
        \item $\mathbb{E}[\mathbf{f}] = \bh \leq \bq$,
        \item $\mathbf{f}(E) = \mathbf{h}(E) = n-1$,
        \item \label{cond:cut-lb2}for all nonempty $S \subseteq V$ with $r \not\in S$, 
    $(\mathbf{1}_T + \mathbf{f})(\delta^-(S)) \geq 2$ by \Cref{clm:2conn}.
    \end{enumerate}
    Note that \ref{cond:cut-lb2} still holds as \Cref{clm:2conn} does not require capacity constraints for $\widehat{\bg}$, or equivalently $\mathbf{f} \leq \bq$. Therefore, the integral flow $\bf{f} = \widehat{\bg}|_E$  satisfies all conditions in \Cref{lem:new-flow-subroutine}.
\end{proof}

%% file: flow-network.tex
\usetikzlibrary{arrows.meta}
\definecolor{treeblue}{HTML}{315F85}
\definecolor{edgegray}{HTML}{6C7782}
\begin{tikzpicture}[
    font=\small,
    >={Stealth[length=2mm,width=1.4mm]},
    vertex/.style={circle,draw=black!70,fill=white,line width=.7pt,
        minimum size=7mm,inner sep=0pt},
    flow vertex/.style={vertex,minimum size=9mm},
    tree edge/.style={->,draw=treeblue,line width=1.15pt},
    flow edge/.style={->,draw=edgegray,line width=.85pt},
    non-tree edge/.style={flow edge,dashed},
    edge label/.style={fill=white,inner sep=2pt,font=\small},
    layer label/.style={align=center,font=\small},
]

\node[font=\normalsize] at (1.6,4.6) {$G$ and its $r$-arborescence $T$};
\node[font=\footnotesize] at (1.6,4.05) {unit capacities; $r$-mincut value $2$};
\node[vertex,double,double distance=1pt] (r) at (1.6,2.65) {$r$};
\node[vertex] (a) at (.1,.4) {$a$};
\node[vertex] (b) at (1.6,-1.8) {$b$};
\node[vertex] (c) at (3.1,.4) {$c$};
\node[vertex] (d) at (1.6,-4.2) {$d$};

\draw[tree edge,<-] (a) -- node[edge label,left,text=treeblue] {$f_1$} (r);
\draw[tree edge,<-] (b) -- node[edge label,above left,text=treeblue] {$f_2$} (c);
\draw[tree edge,<-] (c) -- node[edge label,right,text=treeblue] {$f_3$} (r);
\draw[tree edge,<-] (d) -- node[edge label,right,text=treeblue] {$f_4$} (c);
\draw[non-tree edge,<-] (b) --
    node[edge label,right,text=edgegray] {$e_1$} (a);
\draw[non-tree edge,<-] (d) --
    node[edge label,left,text=edgegray] {$e_2$} (a);
\draw[non-tree edge,<-] (a) to[bend left=25]
    node[edge label,above,text=edgegray] {$e_3$} (c);
\draw[non-tree edge,<-] (c) --
    node[edge label,below,text=edgegray] {$e_4$} (a);

\draw[tree edge] (-.15,-5.7) -- (.45,-5.7);
\node[anchor=west,font=\footnotesize] at (.6,-5.7) {arborescence $T$};
\draw[non-tree edge] (-.15,-6.2) -- (.45,-6.2);
\node[anchor=west,font=\footnotesize] at (.6,-6.2) {non-tree edges};

\draw[black!15,line width=.5pt] (4.45,-6.65) -- (4.45,4.9);
\node[font=\normalsize] at (8.65,4.6) {Corresponding flow network $H_T$};
\node[font=\footnotesize] at (8.65,4.05)
    {$\mathbf q(g)=1/2$; all auxiliary arcs have capacity $\infty$};
\node[layer label] at (5.65,3.25) {Vertices\\$s_v$, $v\ne r$};
\node[layer label] at (8.65,3.25) {Edges\\$y_g$, $g\in E$};
\node[layer label] at (11.65,3.25) {Tree edges\\$t_f$, $f\in E(T)$};

\node[flow vertex] (sb) at (5.65,1.7) {$s_b$};
\node[flow vertex] (sa) at (5.65,-.5) {$s_a$};
\node[flow vertex] (sc) at (5.65,-2.7) {$s_c$};
\node[flow vertex] (sd) at (5.65,-4.9) {$s_d$};

\node[flow vertex] (yf2) at (8.65,2.25) {$y_{f_2}$};
\node[flow vertex] (ye1) at (8.65,1.15) {$y_{e_1}$};
\node[flow vertex] (yf1) at (8.65,.05) {$y_{f_1}$};
\node[flow vertex] (ye3) at (8.65,-1.05) {$y_{e_3}$};
\node[flow vertex] (ye4) at (8.65,-2.15) {$y_{e_4}$};
\node[flow vertex] (yf3) at (8.65,-3.25) {$y_{f_3}$};
\node[flow vertex] (ye2) at (8.65,-4.35) {$y_{e_2}$};
\node[flow vertex] (yf4) at (8.65,-5.45) {$y_{f_4}$};

\node[flow vertex] (tf2) at (11.65,2.25) {$t_{f_2}$};
\node[flow vertex] (tf1) at (11.65,.05) {$t_{f_1}$};
\node[flow vertex] (tf3) at (11.65,-3.25) {$t_{f_3}$};
\node[flow vertex] (tf4) at (11.65,-5.45) {$t_{f_4}$};

\foreach \v/\g in {b/f2,b/e1,a/f1,a/e3,c/e4,c/f3,d/e2,d/f4} {
    \draw[flow edge] (s\v) --
        node[above,inner sep=3pt,font=\footnotesize] {$\tfrac12$} (y\g);
}

\foreach \i in {1,2,3,4} {
    \draw[flow edge] (yf\i) -- (tf\i);
}
\foreach \e/\f in {1/2,1/1,1/3,3/1,3/3,4/1,4/3,2/1,2/3,2/4} {
    \draw[flow edge,preaction={draw=white,line width=3pt,-,
        shorten <=1pt,shorten >=1pt}] (ye\e) -- (tf\f);
}

\node[font=\footnotesize] at (5.65,-6.25) {demand $-1$};
\node[font=\footnotesize] at (8.65,-6.25) {demand $0$};
\node[font=\footnotesize] at (11.65,-6.25) {demand $+1$};

\end{tikzpicture}

%% file: arborescence-counterexample.tex
\usetikzlibrary{arrows.meta}
\definecolor{treeblue}{HTML}{315F85}
\definecolor{edgegray}{HTML}{6C7782}
\begin{tikzpicture}[
    font=\small,
    >={Stealth[length=1.7mm,width=1.2mm]},
    vertex/.style={circle,draw=black!70,fill=white,line width=.65pt,
        minimum size=6mm,inner sep=0pt},
    tree edge/.style={->,draw=treeblue,line width=1pt},
    non-tree edge/.style={->,draw=edgegray,dashed,line width=.75pt},
    edge label/.style={fill=white,inner sep=1.5pt,font=\footnotesize},
]

\node[vertex,double,double distance=.8pt] (r) at (-1.8,0) {$r$};
\node[vertex] (v1) at (0,1.1) {$v_1$};
\node[vertex] (v2) at (0,-1.1) {$v_2$};
\node[vertex] (v3) at (2.3,0) {$v_3$};

\draw[tree edge] (r) --
    node[edge label,above left,text=treeblue] {$f_1:0$} (v1);
\draw[tree edge] (r) --
    node[edge label,below left,text=treeblue] {$f_2:0$} (v2);

\draw[non-tree edge] (v1) to[bend left=20]
    node[edge label,right] {$\tfrac12$} (v2);
\draw[non-tree edge] (v2) to[bend left=20]
    node[edge label,left] {$\tfrac12$} (v1);
\draw[tree edge] (v1) to[bend left=18]
    node[edge label,above,text=treeblue] {$f_3:\tfrac12$} (v3);
\draw[non-tree edge] (v3) to[bend left=18]
    node[edge label,pos=.6] {$\tfrac12$} (v1);
\draw[non-tree edge] (v2) to[bend left=18]
    node[edge label,pos=.4] {$\tfrac12$} (v3);
\draw[non-tree edge] (v3) to[bend left=18]
    node[edge label,below] {$\tfrac12$} (v2);

\end{tikzpicture}

%% file: feasibility-matching.tex
\usetikzlibrary{arrows.meta}
\definecolor{treeblue}{HTML}{315F85}
\definecolor{crossorange}{HTML}{BD581A}
\definecolor{matchgreen}{HTML}{287D72}
\definecolor{edgegray}{HTML}{89949D}

\begin{tikzpicture}[
  font=\small,
  >={Stealth[length=1.8mm,width=1.25mm]},
  vertex/.style={circle,draw=black!65,fill=white,line width=.65pt,
    minimum size=5.5mm,inner sep=0pt},
  tree edge/.style={draw=treeblue,line width=1.3pt},
  deleted edge/.style={draw=crossorange,dashed,line width=1.2pt},
  crossing edge/.style={->,draw=crossorange,line width=1.0pt},
  internal edge/.style={->,draw=edgegray,line width=1.0pt},
  edge label/.style={fill=white,inner sep=1.3pt,font=\footnotesize},
  edge node/.style={rounded corners=2pt,draw=treeblue,fill=white,
    line width=.7pt,minimum width=11mm,minimum height=6mm,inner sep=2pt},
  selected node/.style={edge node,draw=crossorange,fill=crossorange!7},
  eligible/.style={->,draw=edgegray!65,line width=.7pt},
  matched/.style={->,draw=matchgreen,line width=1.5pt},
]

\node[font=\normalsize] at (3.5,5.3)
  {(a) A matching gives a flow};
\node[font=\footnotesize,text=black!65] at (3.5,4.8)
  {Auxiliary subgraph $H_{R,T}$};
\node[text=black!75] at (2.0,4.3) {$E(R)$};
\node[text=black!75] at (5.45,4.3) {$E(T)$};

\filldraw[rounded corners=4pt,fill=crossorange!4,
  draw=crossorange!35,line width=.7pt]
  (1.33,.76) rectangle (2.67,3.37);
\node[anchor=east,text=crossorange] at (1.14,2.07) {$N(F)$};
\filldraw[rounded corners=4pt,fill=crossorange!4,
  draw=crossorange!35,line width=.7pt]
  (4.78,.76) rectangle (6.12,2.47);
\node[anchor=west,text=crossorange] at (6.3,1.62) {$F$};

\node[edge node,draw=edgegray] (ye1) at (2.0,3.8) {$y_{e_1}$};
\node[selected node] (ye2) at (2.0,2.9) {$y_{e_2}$};
\node[selected node] (ye3) at (2.0,2.0) {$y_{e_3}$};
\node[selected node] (ye4) at (2.0,1.1) {$y_{e_4}$};
\node[edge node] (tf1) at (5.45,3.8) {$t_{f_1}$};
\node[edge node] (tf2) at (5.45,2.9) {$t_{f_2}$};
\node[selected node] (tf3) at (5.45,2.0) {$t_{f_3}$};
\node[selected node] (tf4) at (5.45,1.1) {$t_{f_4}$};

\draw[eligible] (ye3) -- (tf1);
\draw[eligible] (ye2) -- (tf4);
\draw[matched] (ye1) -- (tf1);
\draw[matched] (ye3) -- (tf3);
\draw[matched] (ye2) -- (tf2);
\draw[matched] (ye4) -- (tf4);

\draw[eligible] (.92,.28) -- (1.44,.28);
\node[anchor=west,font=\footnotesize,text=black!75] at (1.57,.28)
  {eligible};
\draw[matched] (3.68,.28) -- (4.2,.28);
\node[anchor=west,font=\footnotesize,text=matchgreen] at (4.33,.28)
  {matched};

\draw[black!15,line width=.5pt] (.45,-.15) -- (6.6,-.15);
\node[font=\footnotesize,text=black!75] at (3.5,-.6)
  {For each matched pair $(e,f)$, with $e=(u,v)$:};
\node[vertex] (sv) at (1.05,-1.3) {$s_v$};
\node[vertex] (ye) at (3.5,-1.3) {$y_e$};
\node[vertex] (tf) at (5.95,-1.3) {$t_f$};
\draw[matched] (sv) -- node[edge label,above,text=matchgreen] {$1$} (ye);
\draw[matched] (ye) -- node[edge label,above,text=matchgreen] {$1$} (tf);

\draw[black!15,line width=.5pt] (7.1,-1.75) -- (7.1,5.5);

\node[font=\normalsize] at (10.9,5.3)
  {(b) Why a matching exists};
\node[font=\footnotesize,text=black!65] at (10.9,4.8)
  {straight lines: $T$; arrows: the arborescence $R$};

\filldraw[rounded corners=4pt,fill=treeblue!6,draw=treeblue!25]
  (7.75,3.1) rectangle (9.8,4.1);
\filldraw[rounded corners=4pt,fill=treeblue!6,draw=treeblue!25]
  (10.5,3.1) rectangle (12.55,4.1);
\filldraw[rounded corners=4pt,fill=treeblue!6,draw=treeblue!25]
  (13.25,3.1) rectangle (14.1,4.1);

\node[vertex] (r) at (8.15,3.6) {$r$};
\node[vertex] (a) at (9.4,3.6) {$a$};
\node[vertex] (b) at (10.9,3.6) {$b$};
\node[vertex] (c) at (12.15,3.6) {$c$};
\node[vertex] (d) at (13.65,3.6) {$d$};

\draw[tree edge] (r) -- node[edge label,above,text=treeblue] {$f_1$} (a);
\draw[deleted edge] (a) -- node[edge label,above,text=crossorange] {$f_3$} (b);
\draw[tree edge] (b) -- node[edge label,above,text=treeblue] {$f_2$} (c);
\draw[deleted edge] (c) -- node[edge label,above,text=crossorange] {$f_4$} (d);

\draw[internal edge] (r) to[bend left=38]
  node[edge label,above,text=edgegray] {$e_1$} (a);
\draw[crossing edge] (r) to[bend right=35]
  node[edge label,below,text=crossorange] {$e_3$} (b);
\draw[crossing edge] (b) to[bend right=35]
  node[edge label,below,text=crossorange] {$e_2$} (d);
\draw[crossing edge] (d) to[bend right=38]
  node[edge label,above,text=crossorange] {$e_4$} (c);

\node[text=treeblue] at (8.78,2.3) {$X_1$};
\node[text=treeblue] at (11.53,2.3) {$X_2$};
\node[text=treeblue] at (13.65,2.3) {$X_3$};

\node[font=\footnotesize,align=center,text=black!75] at (10.9,1.45)
  {Deleting $F=\{f_3,f_4\}$ gives $|F|+1=3$ components.\\[3pt]
   $N(F)$ consists exactly of the crossing edges of $R$.};
\node[text=crossorange] at (10.9,.55)
  {$N(F)=\{e_2,e_3,e_4\},\qquad |N(F)|=3\ge 2=|F|$};

\draw[black!15,line width=.5pt] (7.75,-.15) -- (14.05,-.15);
\node[font=\footnotesize,align=center,text=black!75] at (10.9,-.92)
  {Since $R$ connects all $|F|+1$ components,\\[3pt]
   at least $|F|$ edges of $R$ must cross between them.};
\end{tikzpicture}

%% file: efficient-flow-network.tex
\usetikzlibrary{arrows.meta}
\definecolor{heavyA}{HTML}{315F85}
\definecolor{heavyB}{HTML}{287D72}
\definecolor{heavyC}{HTML}{80649E}
\definecolor{selectedpath}{HTML}{BD581A}
\definecolor{auxgray}{HTML}{83909A}

\begin{tikzpicture}[
    font=\small,
    >={Stealth[length=1.65mm,width=1.15mm]},
    tree vertex/.style={circle,draw=black!65,fill=white,
        minimum size=2.5mm,inner sep=0pt,line width=.6pt},
    endpoint/.style={tree vertex,draw=selectedpath,line width=1pt,
        minimum size=6mm},
    heavy/.style={line width=1.35pt},
    light/.style={draw=auxgray,dashed,line width=.8pt},
    path halo/.style={draw=selectedpath!25,line width=6pt,line cap=round},
    edge label/.style={font=\scriptsize,inner sep=1.5pt},
    sink/.style={circle,draw=auxgray,fill=white,line width=.65pt,
        minimum size=6mm,inner sep=0pt,font=\footnotesize},
    reached sink/.style={sink,draw=selectedpath,fill=selectedpath!8,
        line width=1pt},
    interval/.style={rectangle,rounded corners=1.5pt,draw=auxgray,
        fill=white,line width=.65pt,minimum width=7.5mm,
        minimum height=4.5mm,inner sep=1pt,font=\scriptsize},
    chosen interval/.style={interval,draw=selectedpath,
        fill=selectedpath!8,line width=1pt},
    auxiliary/.style={->,draw=auxgray,line width=.65pt},
    reached arc/.style={->,draw=selectedpath,line width=1pt},
    connector/.style={reached arc,
        preaction={draw=white,line width=3pt,-,shorten <=1pt,shorten >=1pt}},
]

\node[font=\normalsize] at (2.1,7.3) {Heavy-light decomposition of $T$};
\node[font=\normalsize] at (9.35,7.3) {Efficient auxiliary network};
\node[font=\footnotesize] at (6.3,6.75) {Edge nodes};
\node[font=\footnotesize] at (9.0,6.75) {Segment trees};
\node[font=\footnotesize] at (11.9,6.75) {Tree-edge sinks};
\draw[black!15,line width=.5pt] (5.3,-4.55) -- (5.3,7.6);

\foreach \i in {0,...,4} {
    \coordinate (A\i) at (2.2,{5.7-1.2*\i});
    \coordinate (B\i) at (.1,{2.8-1.15*\i});
    \coordinate (C\i) at (4.2,{5.1-1.15*\i});
}
\draw[path halo] (B3) -- (B0) -- (A2) -- (A0) -- (C0) -- (C2);

\foreach \x/\y in {2.7/1.0,3.0/1.05,3.3/1.15,3.6/1.25} {
    \draw[black!22,line width=.5pt] (A3) -- (\x,\y);
    \fill[black!25] (\x,\y) circle[radius=.7pt];
}
\node[text=black!35] at (3.1,.62) {$\cdots$};
\node[font=\scriptsize,text=black!45,align=center] at (3.1,.04)
    {omitted\\light edges};

\foreach \P/\col in {A/heavyA,B/heavyB,C/heavyC} {
    \foreach \i [evaluate=\i as \j using int(\i-1)] in {1,...,4} {
        \draw[heavy,draw=\col] (\P\i) --
            node[edge label,left,text=\col] {$f^{\P}_{\i}$} (\P\j);
    }
}
\draw[light] (B0) -- node[edge label,above] {$\ell_1$} (A2);
\draw[light] (C0) -- node[edge label,above] {$\ell_2$} (A0);
\foreach \P in {A,B,C} {
    \foreach \i in {0,...,4} {
        \node[tree vertex] at (\P\i) {};
    }
}
\node[tree vertex,minimum size=6mm,double,double distance=.7pt] at (A0) {$r$};
\node[endpoint] at (B3) {$u$};
\node[endpoint] at (C2) {$v$};
\node[text=heavyA,font=\normalsize] at (2.68,4.95) {$A$};
\node[text=heavyB,font=\normalsize] at (-.35,3.25) {$B$};
\node[text=heavyC,font=\normalsize] at (4.65,5.65) {$C$};

\node at (2.1,-2.5) {$e=(u,v)$};
\draw[path halo] (.5,-3.2) -- (1.1,-3.2);
\node[anchor=west,font=\footnotesize] at (1.25,-3.2) {path $P_T(e)$};
\draw[light] (.5,-3.75) -- (1.1,-3.75);
\node[anchor=west,font=\footnotesize] at (1.25,-3.75) {light edge};

\foreach \P/\yc/\col in {B/4.9/heavyB,A/1.2/heavyA,C/-2.5/heavyC} {
    \path[draw=\col!40,fill=\col!4,rounded corners=3pt,line width=.5pt]
        (7.75,{\yc-1.47}) rectangle (12.4,{\yc+1.47});
    \node[text=\col,anchor=west] at (8.1,{\yc-.9}) {$\P$};
    \node[interval] (\P root) at (8.55,\yc) {$[1,4]$};
    \node[chosen interval] (\P upper) at (9.95,{\yc+.7}) {$[1,2]$};
    \node[interval] (\P lower) at (9.95,{\yc-.7}) {$[3,4]$};
    \foreach \i/\dy in {1/1.05,2/.35,3/-.35,4/-1.05} {
        \node[sink] (\P leaf\i) at (11.9,{\yc+\dy}) {$t^{\P}_{\i}$};
    }
    \foreach \i in {1,2} {
        \node[reached sink] at (\P leaf\i) {$t^{\P}_{\i}$};
    }
    \draw[auxiliary] (\P root) -- (\P upper);
    \draw[auxiliary] (\P root) -- (\P lower);
    \foreach \i in {1,2} {
        \draw[reached arc] (\P upper) -- (\P leaf\i);
    }
    \foreach \i in {3,4} {
        \draw[auxiliary] (\P lower) -- (\P leaf\i);
    }
}
\node[reached sink] at (Bleaf3) {$t^B_3$};

\node[reached sink] (light1) at (11.9,3.05) {$t_{\ell_1}$};
\node[reached sink] (light2) at (11.9,-.65) {$t_{\ell_2}$};
\node[font=\scriptsize,text=black!60,anchor=east] at (11.35,3.05) {light edge};
\node[font=\scriptsize,text=black!60,anchor=east] at (11.35,-.65) {light edge};

\node[reached sink,minimum size=8mm,font=\small] (ye) at (6.3,1.2) {$y_e$};
\node[text=black!45] at (6.3,5.5) {$\vdots$};
\node[text=black!45] at (6.3,-3.1) {$\vdots$};

\foreach \target in {Bupper,Bleaf3,Aupper,Cupper,light1,light2} {
    \draw[connector] (ye) -- (\target);
}
\node[font=\footnotesize] at (9.4,-4.3)
    {$t_i^P=t_{f_i^P}$; all arcs have capacity $\infty$};

\end{tikzpicture}